\documentclass[11pt]{article}
\usepackage{amsmath,amssymb,amsthm,amsfonts,latexsym,bbm,xspace,graphicx,float,mathtools,mathdots,cancel}
\usepackage{braket,caption,subcaption,ellipsis,xcolor,textcomp,MnSymbol}
\usepackage[colorlinks,citecolor=blue,bookmarks=true]{hyperref}
\usepackage[nameinlink]{cleveref}
\crefname{ineq}{inequality}{inequalities}
\usepackage{url}
\creflabelformat{ineq}{#2{\upshape(#1)}#3}
\usepackage[letterpaper,margin=1in]{geometry}
\usepackage{enumitem}
\usepackage[ruled,vlined,linesnumbered]{algorithm2e}

\usepackage{atbegshi}
\usepackage{bibentry} 
\usepackage{authblk}
\usepackage[explicit]{titlesec}
\newtheorem{theorem}{Theorem}[section]
\newtheorem{lemma}[theorem]{Lemma}

\newtheorem{definition}[theorem]{Definition}

\newtheorem{problem}[theorem]{Problem}

\newcommand\N{{\mathbb{N}}}

\newcommand\poly{\mathrm{poly}}
\newcommand\remove[1]{{}}
\newcommand\sF{{\mathcal{F}}}

\crefname{question}{question}{questions}

\title{Breaking the $2^n$ barrier for graph $k$-coloring}
\author{Kevin Pratt\thanks{ktp2116@columbia.edu}}
\affil{Columbia University}
\date{}
\begin{document}

\maketitle
\begin{abstract}
	We show that for all $k$, there exists $\varepsilon_k > 0$ such that graph $k$-coloring can be solved by a randomized algorithm with one-sided error in time $O((2-\varepsilon_k)^n)$. Prior to this work and independent concurrent work of
	Zamir~\cite{zamirnew}, exponential improvements over the $2^n \cdot \poly(n)$-time algorithm of Bj\"orklund, Husfeldt, and Koivisto \cite{bjorklund2009set} were only known for $k \le 6$.
\end{abstract}

\section{Introduction}
In this paper, we study algorithms for the graph $k$-coloring problem:
\begin{problem}[\textsc{Graph $k$-coloring}]
	Given as input a graph $G$ on $n$ vertices, decide if it is $k$-colorable, that is, if there exists a coloring of its vertices with $k$ colors such that no two adjacent vertices are assigned the same color.
\end{problem}
Graph $k$-coloring for $k \ge 3$ is one of the canonical NP-complete problems \cite{garey2002computers, karp2009reducibility}. Yet despite its intractability, researchers have sought out faster and faster exponential-time algorithms for this basic problem. As a baseline, it can be na\"ively solved in time $O^*(k^n)$\footnote{Throughout, $O^*(-)$ suppresses factors of $\poly(n)$.} by enumerating all $k$-colorings and checking their validity. The state-of-the-art is the following: outside of speedups for small $k$ and for structured graphs (which we review in the next subsection), the fastest algorithm is due to Bj\"orklund, Husfeldt, and Koivisto \cite{bjorklund2009set}, and runs in time $O^*(2^n)$.

This situation contrasts with that for a large number of other NP-complete problems naturally indexed by a parameter $k$, (most notably $k$-SAT \cite{paturi2005improved}, and more generally, arity-$k$ boolean CSPs \cite{schoning1999probabilistic}), where for any fixed $k$, current algorithms are exponentially faster than for the unbounded-$k$ problem. That is to say, they have runtimes of the flavor $(2-\varepsilon_k)^n$ where $\varepsilon_k > 0$ and $\lim_{k \to \infty} \varepsilon_k = 0$.

We establish such a speedup for graph coloring:

\begin{theorem}\label{thm:fin}
For all $k$, there exists $\varepsilon_k > 0$ such that \textsc{Graph $k$-coloring} can be solved in time $O((2-\varepsilon_k)^n)$. The algorithm is randomized with one-sided error: it accepts on a yes instance with probability at least $2/3$, and rejects on any no instance.
\end{theorem}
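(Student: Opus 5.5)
Our plan is a dichotomy on the structure of the maximal independent sets of $G$, driven by the observation that a $k$-coloring exists if and only if $V$ can be covered by $k$ independent sets. If $G$ is $k$-colorable, then in any proper coloring some color class has size at least $n/k$. The Björklund–Husfeldt–Koivisto algorithm computes the $k$-fold cover polynomial
\[
\sum_{X\subseteq V}(-1)^{n-|X|}\,i(X)^k,
\]
where $i(X)$ denotes the number of independent sets contained in $X$. Computing $i(X)$ for every $X$ by zeta transform is what costs $2^n$. We will instead try to evaluate only a restricted family of these terms.

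\textbf{Step 1 (few large independent sets).} Suppose the number of independent sets of size at least $n/k$ is at most $(2-\delta)^n$. Then we guess, or enumerate, one large color class $I$ and must decide whether $G-I$ is $(k-1)$-colorable. The set $I$ ranges over at most $(2-\delta)^n$ choices, and $G-I$ has at most $(1-1/k)n$ vertices. Applying the $O^*(2^{|V\setminus I|})$ algorithm to $G-I$ gives a total of $\sum_I 2^{n-|I|}$. This is \emph{not} obviously below $2^n$, so a cruder accounting will not do. Instead we would choose a threshold $\alpha n$ and use the enumeration of large sets more cleverly, for instance by requiring $I$ to be the largest class and using a weighted count. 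We would set this up as a balance between the count of large sets and the residual $2^{n-|I|}$.

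\textbf{Step 2 (many large independent sets).} If there are at least $(2-\delta)^n$ independent sets of size at least $n/k$, we expect to extract structure. By a supersaturation or regularity-type argument, together with the fact that $G$ has max-degree-like sparsity in a large part, we aim to find a set $S$ of $\Omega(n)$ vertices inducing very few edges (nearly independent). We would then design a randomized algorithm exploiting this structure. Concretely, we randomly restrict each vertex of $S$ to a random subset of allowed colors, or apply the sparse-graph speedup: graphs with bounded average degree admit $(2-\varepsilon)^n$ coloring, a known result, since the number of maximal independent sets or the transform can be pruned over a dense-enough family. Randomization and one-sided error come from this random restriction: a valid coloring survives with probability $1/\mathrm{poly}$ or $c^{-n}$ small enough, and we repeat. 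A no instance is never accepted, because we only accept after verifying a coloring or a nonzero count.

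\textbf{Main obstacle.} We expect the hardest step to be Step 2, namely converting ``many independent sets of linear size'' into an algorithmic saving. The first thing we would try is a counting lemma: many large independent sets force a linear-size set $T$ such that most vertices of $T$ have low degree into $T$. We would then run the BHK inclusion–exclusion in which the subsets $X$ are restricted, on $T$, to a family of size $(2-\varepsilon)^{|T|}$, using low-degree structure to compute $i(X)$ on that family quickly. Tuning $\delta$ against the $\varepsilon$ obtained in Step 2 would finish the proof, with $\varepsilon_k$ depending only on $k$.
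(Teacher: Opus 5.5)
There is a genuine gap, and it sits in Step 1, not in Step 2 as you expect. With your threshold, Step 2 is actually the easy case. If $G$ has matching number $\nu$, then $G$ has at most $3^{\nu}2^{n-2\nu}=2^n(3/4)^{\nu}$ independent sets, so having $(2-\delta)^n$ of them forces $\nu=O(\delta n)$. The endpoints of a maximal matching then form a vertex cover of size $O(\delta n)$. Brute-forcing its colorings, and checking that every remaining vertex (these form an independent set) has a free color, decides $k$-colorability in time $k^{O(\delta n)}\poly(n)$.

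Your proposed mechanism for Step 2 would not have worked anyway. A linear-size $T$ with low degree into $T$ exists in every yes-instance, namely a color class. The known sparse-graph speedups, including \Cref{thm:zamir}, need low degree in $G$, not in $G[T]$.

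So all the difficulty lies in Step 1, which contains, for instance, every graph with a perfect matching. There your accounting fails by an exponential factor. When the candidate classes have size near $n/k$, the cost is about $N\cdot 2^{(1-1/k)n}$. This is $O((2-\varepsilon)^n)$ only if $N\le 2^{n/k}(1-\varepsilon/2)^n$, far below $(2-\delta)^n$. Lowering the threshold to about $2^{n/k}$ does not help, because then the other side carries no information: a single independent set of size $2n/k$ already has more than $2^{n/k}$ subsets of size at least $n/k$. The ``weighted count'' you allude to would have to be the entire theorem.

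What is missing is a mechanism that turns partial knowledge of a witness coloring into time savings, together with a cheap way to acquire that knowledge. The paper fixes the profile $p$ (polynomially many guesses) and proves \Cref{lem:forcing}. After $O^*(2^{H(1/2-\eta)n})$ preprocessing by trimmed subset convolution, one can decide in time $O^*(2^{n-|A|})$ whether some profile-$p$ coloring colors all of a given set $A$ inside a palette $R$, provided $p(R)\le 1/2-\eta$ or $|R|\le 2$.

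Each constrained vertex saves only a factor of $2$, while a blindly chosen vertex lands in an $\eta$-good palette with probability only $p(R)<1/2$. So the constraints must be correct with per-vertex probability bounded above $1/2$. The paper extracts this bias from the witness via $\Sigma_T(v)$, the set of colors $v$ sees at least $T$ times. After disposing of graphs with $\varepsilon n$ vertices of degree at most $d$ by \Cref{thm:zamir}, one of two cases holds.
\begin{itemize}
\item Most vertices have an $\eta$-good $\Sigma_T(v)$. Then \Cref{lem:degen} branches on assigning at least $d-c$ of the first $d$ neighbors of a random high-degree vertex to a good palette.
\item Some linear-size $U_{R_0}$ has $p(R_0)>1/2-\eta$. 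Then brute-forcing the coloring of a random set of $(\ln(4k)/T)n$ vertices exposes a linear-size $V_R$, with $R\supseteq R_0$, whose colors lie in $R^c$. Inside it, \Cref{lem:averaging} yields a palette that is $\eta$-good or has at most two colors and covers more than a $1/2+1/(6k)$ fraction. Hence a random $A\subseteq V_R$ of size $|V_R|/(6k)$ is correctly constrained with probability $2^{-\zeta|A|}$ for some $\zeta<1$.
\end{itemize}
Nothing in your plan plays the role of either ingredient, and counts of large independent sets do not by themselves yield such biased list constraints.
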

Independent and concurrent work of Zamir \cite{zamirnew}, posted shortly before this manuscript, proves the same main theorem by a somewhat different approach.
\subsection{Prior work}
In 1976, Lawler gave the first non-trivial algorithm for graph coloring, with runtime $O(2.45^n)$ \cite{lawler1976note}. This algorithm was based on two observations: first, a $k$-coloring is a partition of $V(G)$ into $k$ independent sets, and second, we may assume one of these is maximal. By combining these facts with a fast algorithm for enumerating maximal independent sets, the result followed. This was subsequently improved by Eppstein to $O(2.42^n)$ \cite{eppstein2003small}, and later by Byskov to $O(2.41^n)$ \cite{byskov2004enumerating}.

The next milestone was the $O^*(2^n)$-time algorithm of Bj\"orklund, Husfeldt, and Koivisto \cite{bjorklund2009set}. Up to this point, coloring algorithms were combinatorial. Their algorithm deviated from that by combining the independent-set partitioning characterization of colorability with an algebraic component: an algorithm for fast subset convolution. In fact, their algorithm follows as a corollary of fast subset convolution.

Subsequent work then led to truly sub-$2^n$-time algorithms in two scenarios: for small $k$, and for graphs with structure. 

\paragraph{Small $k$.}
In the case when $k=3$, the approach of Lawler in fact runs in time $O^*(3^{n/3}) < O(1.45^n)$. This was subsequently improved to $O(1.42^n)$ by Schiermeyer \cite{schiermeyer1993deciding}, and later to $O(1.33^n)$ by Beigel and Eppstein \cite{beigel20053}.

For $k=4$ coloring, a $O(1.81^n)$ time algorithm was given in \cite{beigel19953}. This was improved to $O(1.728^n)$ in \cite{fomin2007improved}, and then to $O(1.716^n)$ in \cite{wu2024faster}.

For $k=5$ and $6$, Zamir gave an exponential speedup by combining the approach of \cite{bjorklund2009set} with new combinatorial insights, and in particular, with faster algorithms for certain structured graphs \cite{zamir2021breaking}.

\paragraph{Structured graphs.}
In \cite{bjorklund2008trimmed} it was shown that for graphs of maximum degree $\Delta$, $k$-coloring can be solved in time $O((2-\varepsilon_{\Delta}))^n$. This was based on an improved algorithm for subset convolution in the case when the sets in the input families have small cardinality (for instance, bounded by $n/10$).

In \cite{golovnev2016families}, this was subsequently generalized to the class of graphs of average degree at most $\Delta$. That was generalized even further by a $(2-\varepsilon_{\alpha,\Delta})$-time algorithm for the case when $\alpha$n vertices have degree at most $\Delta$ \cite{zamir2021breaking}.

In a different direction, in \cite{zamir2023algorithmic} it was shown that for every fixed
$k$ and $C$, $k$-\textsc{Coloring} admits an
$O((2-\varepsilon_{k,C})^n)$-time algorithm on graphs satisfying
\[
\Delta_{\max}(G)\leq C\,d_{\mathrm{avg}}(G).
\]
This includes regular graphs of arbitrary degree and graphs of
any fixed positive density.
\mbox{}\\

These results cover broad regimes, including sparse, regular, and dense graphs. They leave open, however, the problem for arbitrary highly irregular graphs, for any $k \ge 7$.

\section{Our approach}
We begin by focusing on the problem $k$-coloring with specified color class sizes $(n_1, n_2, \ldots, n_k)$, with $\sum n_i = n$. We will refer to $p := (n_1/n, \ldots, n_k/n)$ as the \emph{profile} of the coloring we wish to detect. Note that it suffices to design an algorithm for this problem, since for fixed $k$ the cost of enumerating all $\binom{n+k-1}{k-1}$ profiles is negligible.

A key quantity throughout the proof is the mass $p(R)$ of a set of colors (or \emph{palette}) $R$. We say that a set of vertices $A$ is \emph{$R$-realizable} if $A$ can be partitioned into $|R|$ independent sets of sizes $(p_i \cdot n)_{i \in R}$. We say that $R$ is $\eta$-good if $p(R) \le 1/2 - \eta$. Such sets are key to our approach; their significance is that for an $\eta$-good $R$, we can construct all $R$-realizable subsets of $V(G)$ in time $2^{H(1/2 - \eta)n} \ll 2^n$ using the trimmed M\"obius inversion algorithm of \cite{bjorklund2008trimmed}. At a very high level, to improve on the $2^n$ algorithm, we will find a way to avoid having to na\"ively construct $R$-realizable subsets of vertices when $p(R) > 1/2 - \eta$. Our main trick is to only perform ``above-$(1/2 - \eta)$" subset convolutions on significantly restricted universes.

\paragraph{Speedup mechanism}
The basic mechanism behind our result is a speedup for $k$-coloring with certain list coloring constraints (\Cref{lem:forcing}). Suppose we wish to detect $k$-colorings with profile $p$, with the further restriction that some set of vertices $A$ have color contained in $R \subset [k]$. Then if either $|R| \le 2$ or $R$ is $\eta$-good, we show how to solve this in time $O^*(2^{H(1/2 - \eta)n} + 2^{n-|A|})$. Moreover, the cost of $2^{H(1/2 - \eta)n}$ can be paid up-front as a preprocessing step. 

The algorithm is simple: to preprocess, precompute all pairs $(R,B)$ for $R \subset [k]$ with $R$ $\eta$-good and $B \subset V(G)$ such that $B$ is $R$-realizable. This can be done in time $O^*(2^{H(1/2 - \eta)n})$ using fast trimmed subset convolution. It is then relatively straightforward to reduce a query $A \subseteq V(G), R \subset [k]$ to doing a subset convolution on the universe $V(G) \setminus A$. 

We will use this primitive repeatedly in what follows.
\paragraph{Our algorithm}
Our main algorithm (\Cref{thm:main}) is a combination of three different algorithms, which will detect different types of witness colorings. This algorithm will have the property that it will reject on any non-$k$-colorable graph, and accept with constant probability on any graph that is $k$-colorable with profile $p$. It is therefore allowed to accept if $G$ is $k$-colorable with a profile other than $p$, which suffices as ultimately we aim to solve $k$-colorability (and not $k$-coloring with profile $p$).
\paragraph{Many low-degree vertices}
If at least 1\% of vertices in $G$ have degree at most some $d = d(k)$ (one can imagine $d \approx 8^k$), we apply the algorithm for many bounded-degree vertices of \cite{zamir2021breaking}. Since $d$ is a function of $k$ and 1\% is a constant, this runs in $\ll 2^n$ time.
\newline\mbox{}

The remaining hard case is now when 99\% of vertices have degree at least $d$. The rest of our approach is based on the following dichotomy for the case. Fix a witness coloring $C = (C_1, \ldots, C_k)$ and a threshold $T = T(k)$. For a vertex $v$, we then let 
\[\Sigma_T(v) = \{i \in [k]: |N(v) \cap C_i| \ge T\}\]
and let
\[U_{R} = \{v \in V(G) : \Sigma_T(v) = R\}.\]
In words, $U_{R}$ is the set of vertices which, in the witness coloring, see each color in $R$ at least $T$ times, and see every other color fewer than $T$ times. 

The sets $U_R$ are unknown by the algorithm and are only used in the analysis. We later take $d \ge T \cdot (k-1) + 1$, so that the sets $U_R$ partition the degree at least $d$ vertices. In particular, 99\% of vertices are contained in some $U_R$.

Then, either:

\begin{enumerate}
	\item At least a $15/16$ fraction of vertices are contained in $\bigcup_{R\text{ $\eta$-good}} U_R$;
	\item At least a $99/100 - 15/16 > 1/32$ fraction of vertices are contained in $\bigcup_{R : p(R) > 1/2 - \eta} U_R$, and hence at least a $2^{-k-5}$ fraction of vertices are contained in some particular $U_{R_0}$ with $p(R_0) > 1/2 - \eta$.
\end{enumerate}
These cases exhaust all possible remaining witness colorings. We give two algorithms which give speedups for these cases.

\paragraph{Case 1: Many vertices almost see $\eta$-good palettes}

To detect such a witness, we use a branching algorithm (\Cref{lem:degen}). For each $\eta$-good set $R$, this algorithm will maintain sets $W_R$ of vertices assigned to $\eta$-good palettes $R$. Initially these are empty. As the main step of the algorithm, we select an unassigned vertex of degree at least $d$ among the unassigned vertices at random. Then we branch: for each $\eta$-good palette, and all subsets of this vertices first $d$ neighbors of size at least $d-Tk$, we assign these neighbors to this palette. Note also that for one of the ``good" vertices in an $\eta$-good $R_0$, fewer than $Tk$ of its neighbors do not have witness color in $R$ by definition of $U_R$, so some branch will produce an assignment to the neighborhood consistent with the witness.

This is repeated until one of two conditions holds. The first is that for some $R$, $|W_R| > \lambda n$, for some small $\lambda(k)$. In this case, we run \Cref{lem:forcing}. If this condition does not hold, but if at least half of the vertices in the subgraph induced by the unassigned vertices have degree at most $d$, then we brute-force the colors of the assigned vertices $\cup_{R} W_R$ (of which there are fewer than $2^k \lambda n$), and then use the many low-degree list coloring algorithm of \cite{zamir2021breaking} on the unassigned vertices.


\paragraph{Case 2: Many vertices see a non-$\eta$-good palette}
If Case 1 fails, there exists some $R_0$ that is \emph{not} $\eta$-good and where $|U_{R_0}| > 2^{-k-5} \cdot n$. Our algorithm in this case is based on the following natural idea: sample a set $S$ of size $\sigma n$ for $\sigma= \ln(4k)/T$, and brute-force its $k$-colorings. If $T$ is large enough, we show that this will create ``probabilistic" list-coloring restrictions on a large enough set of vertices that the cost of the brute-force is paid for by the savings from \Cref{lem:forcing}.

More precisely, define $V_R(S)$ to be the vertices in $G$ which see exactly the colors in $R$ in $S$. Note that these sets are easily determined algorithmically after guessing a coloring of $S$. Using that every vertex $v \in U_{R_0}$ sees all colors in $R_0$ at least $T$ times, standard probabilistic considerations plus pigeonholing show that, with constant probability, there exists an $R$ with $R_0 \subseteq R$ and $|V_R| \ge 2^{-2k-6} \cdot n$.

Because $R_0 \subseteq R$, $p(R) \ge p(R_0) > 1/2 - \eta$. If we could upgrade this very slightly to $p(R) > 1/2 + \eta$, we would be done, because then we could directly apply \Cref{lem:forcing} to assign $V_R$ the palette $R^c$, which would satisfy $p(R^c) < 1/2 - \eta$. Ignoring preprocessing, this would give a runtime of roughly
\[k^{\sigma n} \cdot 2^{n(1-4^{-k})} =k^{\ln(4k) n/T} \cdot 2^{n(1-4^{-k})} \ll 2^n\]
for $T$ a sufficiently large function of $k$.

Such an upgrade need not hold, so instead we use an averaging argument to show that there is some $R' \subseteq R^c$ such that either $|R'| \le 2$ or $R'$ is $\eta$-good, and whose colors make up significantly more than half of $V_R$ in the witness (\Cref{lem:averaging}). More formally, let $u_R$ be the distribution of colors in the witness in $V_R$. Then if $\eta < 1/(6(k-1))$ there exists $R' \subset R^c$ that is either $\eta$-good or where $|R'| \le 2$, and such that $u_R(R') > 1/2 + 1/(6k)$. In either case, we obtain a palette we can use in \Cref{lem:forcing}.

The upshot is the following. Suppose we sample a subset $A$ of $V_R$ of size $|V_R|/(6k)$. Then, a calculation shows that all vertices in $A$ have witness color in $R'$ with probability $2^{-\zeta |A|}$ for some $\zeta = \zeta(k) < 1$. We then run \Cref{lem:forcing}, and repeat this procedure $2^{\zeta |A|}$ times to succeed with constant probability. The final runtime is then roughly
\[k^{\sigma n} \cdot 2^{\zeta |A|} \cdot 2^{n-|A|} = k^{\ln(4k)n/T} 2^{n(1-f(k))} \ll 2^n\]
after taking $T$ sufficiently large.




\section{Preliminaries}
Throughout the paper, $\log$s are base 2. For $x \in (0,1)$, $H(x) = -x \log x -(1-x) \log (1-x)$ is the binary entropy of $x$. We assume $k \ge 3$.

Given set families $\sF_1, \sF_2 \subseteq 2^{[n]}$, let $\sF_1 \ostar \sF_2$ denote their subset convolution:
\[\sF_1 \ostar \sF_2 := \{X \subset [n] : X = A \sqcup B \text{ for } A \in \sF_1, B \in \sF_2\}\]

A key primitive is the following algorithm for trimmed subset convolution.
\begin{theorem}[\cite{bjorklund2008trimmed}] \label{lem:construct}
	Given set families $\sF_1, \sF_2 \subseteq 2^{[n]}$, consisting of sets of cardinality bounded by $a$ and $b$ respectively, we can compute $\sF_1 \ostar \sF_2$ in time $O^*(\binom{n}{\le a+b})$.
\end{theorem}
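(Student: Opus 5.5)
The statement is the trimmed subset convolution of \cite{bjorklund2008trimmed}. The plan is to run the standard zeta/Möbius approach, but only over the down-closed set system
\[\mathcal{D} = \{X \subseteq [n] : |X| \le a+b\}.\]
Every member of $\sF_1 \ostar \sF_2$ is a disjoint union of a set of size at most $a$ and a set of size at most $b$, so it lies in $\mathcal{D}$. It therefore suffices to compute, for every $X \in \mathcal{D}$, the number
\[c(X) = \#\{(A,B) \in \sF_1\times\sF_2 : A\cap B=\emptyset,\ A\cup B = X\},\]
and output those $X$ with $c(X)>0$. The counts are nonnegative integers, so nonzeroness is exactly membership.

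To handle disjointness, I will use the ranked (rank-indexed) convolution. Set $f_i(A)=[A\in\sF_1,\ |A|=i]$ for $i \le a$, and $g_j(B)=[B\in\sF_2,\ |B|=j]$ for $j \le b$. Compute the ranked zeta transforms $\hat f_i(Y)=\sum_{A\subseteq Y} f_i(A)$ and $\hat g_j(Y)$, but only for $Y\in\mathcal{D}$. This is legitimate because $\mathcal{D}$ is down-closed. The Yates/fast-zeta sweep, which processes one coordinate at a time, only ever reads values at subsets of the current set, so restricting every stage to $\mathcal{D}$ costs $O^*(|\mathcal{D}|)=O^*(\binom{n}{\le a+b})$ per rank, and there are only $\poly(n)$ ranks.

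Next, for each $r\le a+b$ and $Y\in\mathcal{D}$, I form the pointwise products
\[\hat h_r(Y)=\sum_{i+j=r}\hat f_i(Y)\,\hat g_j(Y).\]
I then apply the Möbius transform, again restricted to $\mathcal{D}$ (also down-closed, so the same sweep works), to obtain $h_r(X)$. Set $c(X)=h_{|X|}(X)$.

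For correctness, $h_r(X)$ counts pairs $(A,B)$ with $A\cup B=X$, $|A|+|B|=r$. When $r=|X|$ this forces $A\cap B=\emptyset$, which is the usual ranked subset convolution identity. All numbers are at most $2^{2n}$, so arithmetic costs $\poly(n)$ per operation, giving total time $O^*(\binom{n}{\le a+b})$.

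The main point needing care, and the only nonroutine step, is the claim that zeta and Möbius transforms can be computed on a down-closed family in time proportional to its size. I would verify it by writing the transform as the sequence of $n$ coordinate steps $F^{(t)}(Y)=F^{(t-1)}(Y)+[t\in Y]\,F^{(t-1)}(Y\setminus\{t\})$ (with a minus sign for Möbius). Each step at $Y\in\mathcal{D}$ references only $Y\setminus\{t\}\in\mathcal{D}$.

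Enumerating $\mathcal{D}$ and indexing it, for instance by a hash table or by a colexicographic ranking of sets of each size, adds only polynomial overhead per element.
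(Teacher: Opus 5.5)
The paper gives no proof of this statement; it is imported from \cite{bjorklund2008trimmed}. Your argument is correct and is essentially the standard trimmed-convolution proof from that line of work: a ranked subset convolution whose Yates zeta and M\"obius sweeps run only over the down-closed family $\{X\subseteq[n] : |X|\le a+b\}$, with $c(X)=h_{|X|}(X)$ isolating the disjoint pairs. Your integer bound also holds, because the intermediate M\"obius values are partial zeta transforms of the nonnegative $h_r$ and so lie between $0$ and $|\sF_1||\sF_2|$.
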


Following \cite{zamir2021breaking}, we call a graph $(\alpha, \Delta)$-bounded if $\alpha n$ vertices have degree at most $\Delta$.
\begin{theorem}[Theorem 5, \cite{zamir2021breaking}]\label{thm:zamir}
	For every fixed $k,\alpha,\Delta$, there exists
	$\mu_{\alpha,\Delta,k}>0$ such that $k$-\textsc{List Coloring}
	on an $(\alpha,\Delta)$-bounded $n$-vertex graph can be solved
	in time
	\[
	O((2-\mu_{\alpha,\Delta,k})^n).
	\]
\end{theorem}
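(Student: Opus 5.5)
The statement is Theorem~5 of \cite{zamirnew}'s predecessor \cite{zamir2021breaking}, and the natural route is to find many low-degree vertices that act independently, force a canonical local structure on witness colorings around them, and show that this local structure cuts out a constant fraction of the $2^n$ subsets a Bj\"orklund--Husfeldt--Koivisto style dynamic program would otherwise visit.

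\textbf{Step 1: local configurations.} Let $L$ be the set of at least $\alpha n$ vertices of degree at most $\Delta$. Greedily choose $I \subseteq L$ whose closed neighborhoods $N[v]$ are pairwise disjoint. Each chosen vertex rules out at most $\Delta^2+1$ others, so $|I| \ge \alpha n/(\Delta^2+1)$, a linear number of disjoint local configurations of bounded size. Vertices whose list is larger than their degree can be deleted, since they can always be colored last. We may therefore assume that every $v \in I$ has $|L(v)| \le \deg(v) \le \Delta$.

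\textbf{Step 2: canonical witnesses.} Fix any valid list coloring $c$ and modify it on $I$ only. Recolor each $v\in I$ with the smallest color of $L(v)$ not used by its neighbors. This stays proper: $I$ is independent because the closed neighborhoods are disjoint, and each recoloring changes only $v$. Consider the prefix sets $X_j = c^{-1}(\{1,\dots,j\})$ for $j=0,\dots,k$, which form the chain the BHK-type dynamic program walks through. Canonicity gives a local restriction. If $v \notin X_j$ but some color $\le j$ lies in $L(v)$, then that color was blocked, so $N(v)\cap X_j \neq \emptyset$. Hence for each $v\in I$, the pattern ``$v\notin X_j$ and $N(v)\cap X_j=\emptyset$'' is forbidden, except in the explicitly identifiable cases where $L(v)$ has no color $\le j$. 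To make this nontrivial for all $j$ at once, I would first randomly permute the color names, or guess for each $v$ the position of $\min L(v)$, which costs only polynomial or bounded overhead. This yields a family $\mathcal{X}$ of admissible subsets that contains all the $X_j$. On each block $N[v]$, at least one of the at most $2^{\Delta+1}$ patterns is excluded. Since the blocks are disjoint,
\[
|\mathcal{X}| \le 2^{n}\bigl(1-2^{-(\Delta+1)}\bigr)^{|I|} = O\bigl((2-\mu)^n\bigr).
\]

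\textbf{Step 3: restricted dynamic program.} Run the dynamic program only over $\mathcal{X}$. Here $f_j(X)$ is true iff $X\in\mathcal{X}$ is a disjoint union of an admissible $X'\in\mathcal{X}$ for step $j-1$ and an independent set of vertices whose lists contain $j$. The main obstacle is doing this transition in time close to $|\mathcal{X}|$ rather than $2^n$, since a naive subset enumeration reintroduces the full $2^n$ (or $3^n$) cost. I would first try to exploit the product structure of $\mathcal{X}$: it is a product of restricted pattern sets on the disjoint blocks $N[v]$, times the full cube on the rest. A zeta/M\"obius transform can then be carried out coordinatewise, and the ``up-closed'' part of each block handled by the trimmed transforms of \cite{bjorklund2008trimmed}. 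Failing that, I would fall back on a counting (inclusion--exclusion) formulation. There only the values on $\mathcal{X}$ and its down-closure are needed, and one checks that this down-closure still misses a constant fraction per block. Combining the three steps gives the claimed $O((2-\mu_{\alpha,\Delta,k})^n)$ bound.
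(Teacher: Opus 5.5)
The paper gives no proof of this statement; it imports it as a black box from \cite{zamir2021breaking}. Judged on its own, your proposal has two genuine gaps.

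\textbf{Step 1 fails.} The bound $|I|\ge \alpha n/(\Delta^2+1)$ treats the neighbors of a chosen $v$ as if they also had degree at most $\Delta$. That is the maximum-degree setting of \cite{bjorklund2008trimmed}, not the $(\alpha,\Delta)$-bounded one. Here a neighbor of a low-degree vertex can have unbounded degree, so it can lie in $N[u]$ for linearly many $u\in L$.

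A concrete example: take plain $k$-coloring (all lists equal to $[k]$). Join a set $A$ of $\Delta\ge k$ vertices to a set $B$ of $\alpha n$ vertices that have no other neighbors, and give every remaining vertex degree above $\Delta$. Every $N[v]$ with $v\in B$ contains $A$, so $|I|\le 1$.

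This is not just a bad choice of $I$. With full lists your constraint for $v$ reads $X_j\cap N[v]\neq\emptyset$ for every $j\ge 1$, and every $X$ meeting $A$ satisfies all of these. So the admissible family has at least $(1-2^{-\Delta})2^n$ members, and the canonical-witness restriction gives no exponential saving.

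You need an additional idea for such high-degree neighbors. One option runs in rounds. Take a maximal family of non-isolated low-degree vertices with pairwise disjoint closed neighborhoods, and let $M$ be the union of those closed neighborhoods. If the family has fewer than $\delta n$ members, branch over the $k^{|M|}$ colorings of $M$ and delete $M$. By maximality, every surviving non-isolated low-degree vertex loses a neighbor. So after at most $\Delta$ rounds you either obtain a linear disjoint family, or all low-degree vertices are isolated and can be discarded. The thresholds $\delta$ must increase from round to round so that earlier branching costs are paid for. Nothing of this kind appears in your argument, and this is exactly where the statement goes beyond the bounded-degree case.

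\textbf{Step 3 is left open, and your fallback rests on a false claim.} For $j\ge\min L(v)$ your block constraint forbids only the empty pattern on $N[v]$. So, for example with full lists, $\mathcal{X}_j$ is up-closed and its down-closure is all of $2^V$; it does not miss a constant fraction per block. Consequently a layer-by-layer subset convolution $f_{j-1}\mapsto f_j$ cannot be trimmed to $\mathcal{X}_j$, because M\"obius inversion at $X$ needs every subset of $X$.

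What does work, and is the mechanism of \cite{bjorklund2008trimmed}, is to drop the chain and count covers directly. Let $V_c$ be the set of vertices whose list contains $c$, and let $a_c(X)$ count maximal independent sets of $G[V_c]$ contained in $X$. Then $G$ is list-colorable iff $\sum_{X\subseteq V}(-1)^{|V\setminus X|}\prod_{c}a_c(X)>0$.

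A nonzero term forces $X$ to contain a maximal independent set of every $G[V_c]$, and hence to meet every $N[v]$. Each $a_c$ is the zeta transform of a function supported in an up-closed family. Yates's algorithm restricted to that family therefore computes it exactly, in time proportional to the family's size. You would still have to verify, for instance by projecting onto $V_c$, that the families involved are exponentially smaller than $2^n$.
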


We will be interested in deciding if $G$ is colorable with a given \emph{profile}/color distribution $p =  (p_1, \ldots, p_k)$, where we're allowed $p_i n$ vertices of color $i$.

\begin{definition}
	For a profile $p$ and $R \subseteq [k]$ let $p(R) = \sum_{i \in R} p(i)$. We call $R \subseteq [k]$ $\eta$-good if $p(R) \le 1/2 - \eta$.
\end{definition}
\section{Proof}

\begin{lemma}\label{lem:forcing}
Given as input an $n$-vertex graph $G$ and profile $p$ and $\eta > 0$, we can preprocess $G$ in time $O^*(2^{H(1/2 - \eta)n})$ to solve the following queries in time $O^*(2^{n - |A|})$:

Given as input $A \subseteq V(G)$ and $R \subseteq [k]$ such that either $p(R) \le 1/2 - \eta$ or $|R| \le 2$, decide if $G$ admits a $k$-coloring with profile $p$, and where all $v \in A$ have color in $R$.

\end{lemma}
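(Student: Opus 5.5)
The plan is to split any query coloring into the part on the palette $R$ and the part on the complement $R^c$. A $k$-coloring with profile $p$ in which every $v\in A$ gets a color in $R$ is the same thing as a partition $V(G)=X\sqcup Y$ with $A\subseteq X$, where $X$ is $R$-realizable and $Y$ is $R^c$-realizable. Writing $Z=V(G)\setminus A$, we have $Y\subseteq Z$. It therefore suffices to enumerate candidate sets $Y\subseteq Z$ (at most $2^{n-|A|}$ of them) or to run a subset convolution over the universe $Z$, and to test whether $V(G)\setminus Y$ is $R$-realizable. The preprocessing exists to make that test fast whenever $R$ is $\eta$-good.

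\textbf{Preprocessing.} For every $\eta$-good $R$ we compute the family $\mathcal{F}_R$ of all $R$-realizable subsets of $V(G)$ and store it in a lookup table. Start with the family $\mathcal{I}_i$ of independent sets of size $p_in$; enumerating all subsets of size at most $p_in\le n/2-\eta n$ and checking independence costs $O^*(\binom{n}{\le (1/2-\eta)n})$. Then form $\mathcal{F}_R$ as the iterated subset convolution $\mathcal{I}_{i_1}\ostar\cdots\ostar\mathcal{I}_{i_r}$ using Theorem~\ref{lem:construct}. Every partial convolution involves sets of size at most $p(R)n\le(1/2-\eta)n$, so each step costs $O^*(\binom{n}{\le(1/2-\eta)n})=O^*(2^{H(1/2-\eta)n})$. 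There are at most $2^k$ palettes, so the total preprocessing time is $O^*(2^{H(1/2-\eta)n})$.

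\textbf{Query, case $p(R)\le 1/2-\eta$.} We do not need $R^c$-realizability of arbitrary sets $Y$ globally; we only need it for subsets of $Z$. So we run the Björklund--Husfeldt--Koivisto subset convolution (equivalently, the $O^*(2^{|Z|})$ inclusion--exclusion or zeta/Möbius transform) over the universe $Z$ for the graph $G[Z]$. This computes, for every $Y\subseteq Z$, whether $Y$ partitions into independent sets of sizes $(p_jn)_{j\in R^c}$, in time $O^*(2^{|Z|})=O^*(2^{n-|A|})$. For each such $Y$ we then look up whether $V(G)\setminus Y\in\mathcal{F}_R$, and accept if any lookup succeeds. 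Note that $V(G)\setminus Y\supseteq A$ automatically, and the independent sets of $G[Z]$ are exactly the independent sets of $G$ inside $Z$, so this test is correct.

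\textbf{Query, case $|R|\le2$ with $R$ not good.} Here we swap roles and use polynomial-time checks on the $R$ side. Enumerate all $Y\subseteq Z$ for which $X:=V(G)\setminus Y$ is $R$-realizable. If $|R|=1$, this means $X$ is an independent set of the right size. If $|R|=2$, it means $G[X]$ is bipartite with a bipartition of sizes $(p_in,p_jn)$; the valid size splits are exactly those achievable by choosing a side for each connected component, which is a subset-sum problem over components solvable by polynomial-time dynamic programming. Each such test is polynomial, so enumerating all $Y$ costs $O^*(2^{n-|A|})$ and yields the list of valid $Y$ in that time. Next, decide $R^c$-realizability of all $Y\subseteq Z$ at once via the $O^*(2^{|Z|})$ subset convolution on $G[Z]$, and accept if some $Y$ passes both tests.

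Since every step runs in time $O^*(2^{n-|A|})$ after preprocessing, this proves the lemma. The main obstacle is to make sure that the computation over $Z$ really stays within $2^{|Z|}$. In particular, the sizes of the $R^c$ classes must be fixed exactly at $(p_jn)_{j\in R^c}$, which means using a ranked (cardinality-tracking) subset convolution. Ranking multiplies the running time only by a polynomial factor, so the bound is unaffected.
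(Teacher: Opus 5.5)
Your proposal is correct and follows the paper's proof essentially step for step. It uses the same trimmed-convolution preprocessing of $\mathcal{F}_R$ for $\eta$-good $R$, and at query time the same $O^*(2^{n-|A|})$ computation over the universe $V(G)\setminus A$, with the $R$-side handled by table lookup (good $R$) or by polynomial-time independence and bipartite subset-sum checks ($|R|\le 2$). The only cosmetic difference is the order of the last step: you compute the $R^c$-realizable subsets of $Z$ first and then test complements pointwise, whereas the paper collects the admissible $R$-side remainders into $\sF_R'$ and includes them in a final subset convolution checked at $V(G)\setminus A$; these amount to the same thing.
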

\begin{proof}
We begin by enumerating all independent sets of size $p_i \cdot n$, whenever $p_i \le 1/2 - \eta$. Let $\sF_i$ be the resulting set family. For every $\eta$-good $R$, use \Cref{lem:construct} to compute 
\[\sF_R := \bigostar_{c \in R} \sF_c .\]
Because $R$ is $\eta$-good, all $\sF_i$ are defined. This is done in time $2^{H(1/2 - \eta)n} \cdot \poly(n)$. For all $S \in \sF_R$, we insert the tuple $(R,S)$ into a hash table $H$. This concludes the preprocessing step.

Now, given a query $A \subset V(G),  R \subset [k]$:
\begin{itemize}
	\item Suppose $R$ is $\eta$-good. For every $S \subseteq V(G) \setminus A$, check if $(R, S \cup A) \in H$, and if it is, add $S$ to a new set family $\sF_R'$. This takes time $2^{n - |A|} \cdot \poly(n)$. Next, for all $i \notin R$, construct the set family $\sF_i'$ consisting of independent sets $S \subseteq V(G) \setminus A$ of size $p_i n$. This is done by brute force in time $O^*(2^{n-|A|})$. Finally, using \Cref{lem:construct}, compute
	\[\sF= \sF_R' \ostar \left (\bigostar_{i \notin R} \sF_i' \right ).\]
	Because all sets are contained in $V(G) \setminus A$, this takes time $2^{n - |A|}$. It is readily observed that $V(G) \setminus A \in \sF$ if and only if a coloring satisfying the desired constraints exists. 
	
	\item Suppose $R = \{i\}$. Enumerate all $B \subset V(G) \setminus A$ where $A \cup B$ is an independent set and $|A|+|B| = p_i n$; call this family $\sF_i'$. This takes time $O^*(2^{n - |A|})$. Also construct all $\sF_j'$ of independent sets on $V(G) \setminus A$ of size $p_j n$, for $j \neq i$; this also takes time $O^*(2^{n-|A|})$. Compute $\bigostar_{i \in [k]} \sF_i'$, and return yes iff $V(G) \setminus A$ is in the output.

	\item Suppose $R = \{i,j\}$. Enumerate all subsets $B$ of $V(G) \setminus A$ of size $p(R)n - |A|$ such that $A \cup B$ is 2-colorable, and moreover admits a 2-coloring with color class sizes $p_i n$ and $p_j n$; for each $B$ this is done in polynomial time\footnote{Find all bipartite components of $G[A \cup B]$. For each one record pairs $(a_{k,0}, a_{k,1})$ of bipartition sizes, and then run a standard dynamic program to see if there exists $v \in \{0,1\}^m$ such that $p_i n = \sum_{k=1}^m a_{k,v_i}$.} so in total this can be done in time $O^*(2^{n - |A|})$. Call this $\sF'_{ij}$. For all  $\ell \notin \{i,j\}$ construct the family $\sF_\ell'$ of independent sets on $V(G) \setminus A$ of size $p_\ell n$; this also takes time $O^*(2^{n-|A|})$. Compute $\sF_{ij}' \ostar \bigostar_{\ell \notin \{i,j\}} \sF_\ell'$, and return yes iff $V(G) \setminus A$ is in the output.\qedhere
\end{itemize}

\end{proof}
\begin{lemma}[Speedup for many mostly-light neighborhoods]\label{lem:degen}
	Fix $k,c \in \N$, and $\eta > 0$.
	There exists $d(k,c) \in \N$ and $\varepsilon(k,c,\eta) > 0$ such that the following can be solved in time $O((2-\varepsilon)^n)$.
	
	Given as input a graph $G$ and a profile $p$, we are promised that if $G$ is $k$ colorable with profile $p$, then it admits a coloring with at least a $15/16$-fraction of vertices $v$ satisfying
	\begin{enumerate}
		\item $\deg(v) \ge d$,
		\item all but at most $c$ neighbors of $v$ are contained in some $\eta$-good palette.
	\end{enumerate}
	The algorithm outputs yes with constant probability if $G$ is $k$-colorable with profile $p$, and always outputs no if $G$ is not $k$-colorable.
\end{lemma}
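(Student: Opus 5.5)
We follow the branching strategy from the Case 1 overview: build list-coloring constraints on neighborhoods of random high-degree vertices, and run a degenerate-case algorithm whenever the constraints stop growing.

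\textbf{Setup.} Fix the witness coloring $C$ promised by the hypothesis and call a vertex \emph{good} if it satisfies conditions 1 and 2. For a good vertex $v$ fix an $\eta$-good palette $R_v$ so that all but at most $c$ neighbors of $v$ have witness color in $R_v$. The algorithm keeps sets $W_R$, one per $\eta$-good $R$, together with a set $D$ of processed vertices. All of these start empty. Choose parameters $d \gg c$ and $\lambda = \lambda(k,c) > 0$ small.

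\textbf{One round.} Let $U = V(G) \setminus (D \cup \bigcup_R W_R)$ be the unassigned vertices. While every $|W_R| \le \lambda n$ and at least half of $U$ has degree at least $d$ in $G[U]$, do the following.
\begin{itemize}
\item Pick a uniformly random $v \in U$ with $\deg_{G[U]}(v) \ge d$, and put it in $D$.
\item Let $N$ be its first $d$ neighbors in $U$.
\item Branch over an $\eta$-good $R$ and a set $X \subseteq N$ of size at least $d-c$, and add $X$ to $W_R$.
\item Put $N \setminus X$ into $D$ as well, so that each round removes exactly $d+1$ vertices from $U$.
\end{itemize}
A round costs at most $2^k \binom{d}{\le c} \le 2^k d^c$ branches and fixes $d+1$ vertices.

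Call a branch \emph{correct} if $v$ is good, $R = R_v$, and $X$ is exactly the set of vertices of $N$ whose witness color lies in $R_v$. Along correct branches every vertex of $W_R$ has witness color in $R$.

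\textbf{Termination by $W_R$ growing.} If some $|W_R| > \lambda n$, call \Cref{lem:forcing} with $A = W_R$ and palette $R$. The preprocessing costs $2^{H(1/2-\eta)n}$ and is paid once. The query costs $2^{n-\lambda n}$. The branching factor before this point is at most $(2^k d^c)^{r}$, where $r \le 2^k \lambda n/(d-c)+1$ is the number of rounds, since each round adds at least $d-c$ vertices to some $W_R$. Taking $d$ large in terms of $k, c$ makes $(2^k d^c)^{2^k\lambda n/(d-c)} \le 2^{\lambda n/2}$, giving total time about $2^{(1-\lambda/2)n}$.

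\textbf{Termination by low degree.} If instead half of $U$ has low degree in $G[U]$, first brute-force colors for the assigned vertices. There are at most $|\bigcup W_R| + |D| \le 2^k\lambda n + O(\lambda n)$ of them, and vertices in $W_R$ only need colors from $R$, so this costs $k^{O(2^k\lambda n)}$. This induces list constraints on $U$. If $|U| \ge n/2$, run \Cref{thm:zamir} on $G[U]$ with $\alpha = 1/2$ and $\Delta = d$, in time $(2-\mu)^{|U|}$, together with the $2^{n-|U|}$-type cost of handling the rest. Profile sizes can be enforced by guessing the color counts on $U$ polynomially, or by noting that \Cref{thm:zamir}'s list-coloring can be combined with counting; I would handle this by a polynomial guess of the class sizes inside $U$, assuming the cited algorithm extends to fixed class sizes. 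Otherwise $|U| < n/2$, so many vertices were fixed, and that means some $W_R$ is large, contradicting the branch condition. The precise statement is that $|D| \le |\bigcup W_R| \cdot O(c/(d-c))$ plus the random-vertex count, so $\lambda$ and $d$ must be tuned so that all fixed vertices total less than $n/2$ before a $W_R$ exceeds $\lambda n$. Choosing $\lambda < 1/2^{k+2}$ works. Since $\lambda$ is tiny, $k^{O(2^k\lambda n)}$ times the branching factor stays below the $(2-\mu)^{n/2}2^{n/2}$ savings for $\lambda$ small relative to $\mu$.

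\textbf{Main obstacle: success probability.} We need the randomly chosen $v$ to be good in every round. Initially at least $15n/16$ vertices are good. Over the run at most about $2^k\lambda n \cdot (d+1)/(d-c) \ll n/16$ vertices are removed. While the loop runs, at least $|U|/2 \ge n/4$ vertices of $U$ have high degree in $G[U]$, and at most $n/16 + o(n)$ vertices of $U$ are bad. So each pick is good with probability at least some constant $q \ge 1/2$.

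Over $r = O(\lambda n/d)$ rounds, the probability that all picks are good is $q^{r} = 2^{-O(\lambda n/d)}$. Repeating the algorithm $2^{O(\lambda n/d)}$ times fixes this, and for large $d$ the cost is absorbed into the $2^{-\lambda n/2}$ savings.

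There is one more subtlety. Goodness is measured with respect to all neighbors in $G$, but we restrict to neighbors in $U$. This is harmless: a subset of $v$'s neighborhood still has at most $c$ vertices outside $R_v$. The real constraint is balancing $\lambda$, $d$ and $\mu_{1/2,d,k}$: $\mu$ depends on $d$, so $\lambda$ must be chosen after $d$, and the bound $k^{O(2^k\lambda n)}$ must be made smaller than $(2/(2-\mu))^{n/2}$. This ordering of parameters ($d$ first, then $\lambda$) is the step I expect to need care. The brute-force cost on $W$ scales with $\lambda$, while the branching overhead scales like $\lambda\log d/d$, so the choice is consistent.

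All failures are one-sided. Every accepted branch exhibits a genuine coloring, so the algorithm never accepts a non-colorable graph.
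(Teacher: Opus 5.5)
Your proposal is correct and follows essentially the same route as the paper: pick a random vertex of degree at least $d$ in $G[U]$ and branch over an $\eta$-good palette $R$ together with a subset of size at least $d-c$ of its first $d$ neighbors. When some $W_R$ reaches $\lambda n$ you finish with \Cref{lem:forcing}; when half of $U$ has low degree you brute-force the assigned vertices and then apply \Cref{thm:zamir}. You repeat $2^{O(2^k\lambda n/(d-c))}$ times and fix $d$ before $\lambda$. Your discard set $D$ is a harmless variation; the paper simply leaves those vertices in $U$.

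One step should be dropped rather than assumed. You do not need to enforce the profile in the \Cref{thm:zamir} step, nor assume that theorem extends to fixed class sizes. The lemma only requires rejecting graphs that are not $k$-colorable, so accepting any proper $k$-coloring there is fine, exactly as the paper does.
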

\begin{proof}
We assume $n \ge n_0(k)$, where $n_0(k)$ is sufficiently large for the rounding inequalities below. The finitely many smaller instances are solved by exhaustive search.
\paragraph{Parameters}
We will prove this with
\[d = 2^{k+4}(c+1)(k+c+4).\]

Let $r := d-c,$ and $\mu := 1- \log (2-\mu_{1/2,d,k}) > 0$, with $\mu_{1/2, d,k} > 0$ from \Cref{thm:zamir}. We then let
\[\lambda = \min \left (2^{-k}/8,  \frac{\mu}{2^{k+1} (\log (2^{k+1} \cdot \sum_{i=0}^c \binom{d}{i})/(d-c) + \log k + \mu - 1)} \right ).\]

We begin with the preprocessing step of \Cref{lem:forcing}. Then we will run the following recursive subroutine $C_{rep} =  2^{\lceil 2^k \lambda n/(d-c)\rceil} $ times:

\begin{quote}
Throughout the subroutine's execution, we will maintain assignments of vertices to $\eta$-good palettes. We write $W_R$ for the vertices assigned to the palette $R$, $W := \sqcup_{R \text{ $\eta$-good}} W_R$, with $w := |W|$. Let $U = V(G) \setminus W$ be the unassigned vertices. Initially $W = \emptyset$.
\paragraph{Base case 1:} Suppose $|W_R| \ge \lambda n$ for some $R$. Use \Cref{lem:forcing} to solve the coloring problem with the constraints that $C(W_R) \subset R$, and accept if this succeeds. This takes time
\[O^*(2^{n - |W_R|}).\]

\paragraph{Base case 2:} Suppose $|W_R| < \lambda n$ for all $R$, but at least half of $U$ has degree at most $d$ in $G[U]$. Enumerate all $k^{w}$ colorings of $W$. For each valid coloring, we are left with a list coloring instance on $U$, with half of the vertices having degree at most $d$. Run the algorithm of \Cref{thm:zamir}, and accept if this succeeds. This takes time
\[O^*(k^w \cdot 2^{(1 - \mu)(n-w)}).\]

\paragraph{Main step:} 
If neither base case holds, pick a random vertex $u \in U$ with degree in $G[U]$ at least $d$. Consider the first $d$ neighbors of $u$ in some fixed ordering. Enumerate all subsets $X$ of these $d$ neighbors of size in the range $[d-c, d]$, as well as all nonempty $\eta$-good palettes $R$. Assign all $v \in X$ to $W_R$ and recurse.
 \end{quote}
If none of the $C_{rep}$ trials accepted, reject.
\paragraph{Correctness}
First consider when $G$ is $k$-colorable with profile $p$, with at least a $15/16$ fraction of vertices satisfying the degree and neighborhood conditions. We claim that this algorithm accepts with constant probability.

First, if we reach a main step, then neither base case holds. That is, more than half of $U$ has degree at least $d$ in $G[U]$ and $|U| \ge n - 2^k \lambda n \ge 7n/8$ (as $\lambda \le 2^{-k}/8$). So $U$ contains at least $7n/16$ vertices with degree at least $d$ in $G[U]$. Among these, at most $n/16$ can fail condition (2). Thus the probability that a randomly chosen $u$ with $\deg_U(u) \ge d$ satisfies the promised conditions is at least
\begin{equation}\label{eqn:uprob}
\frac{7n/16 - n/16}{7n/16} = 6/7 > 1/2.
\end{equation}
If this happens, then there is some subset $X \subseteq N_U(u)$ of size at least $d-c$ contained in an $\eta$-good palette $R$. In some branch we identify both $X$ and $R$, and assign $X$ to $W_R$.

Suppose the algorithm always successfully identifies such $u$'s, $X$'s, and $R$'s until reaching base case 1. Then all vertices in the large set $W_R$ correctly have hidden color in $R$, and \Cref{lem:forcing} will return yes. If we reach base case 2, then some coloring of $W$ matches the witness coloring, and the resulting list color problem has a solution. (In fact it will accept whenever $G$ is $k$-colorable, which is fine.) The only source of randomness on this path is over the choice of $u$'s. If we perform the main step $t$ times before reaching a base case, then the probability of only choosing good $u$'s is at least $2^{-t}$ by the calculation of \Cref{eqn:uprob}. But from the criterion of base case 1, the main step can be executed at most $\lceil 2^k \lambda n/r \rceil$ times, so this probability of accepting is greater than $(1/2)^{\lceil 2^k \lambda n/r \rceil} = C_{rep}^{-1}$. Thus the final success probability after all trials is at least $1-(1-C_{rep}^{-1})^{C_{rep}} > 1-e^{-1}$.

Finally, suppose $G$ is not $k$-colorable. Then case 1 will always reject, by correctness of \Cref{lem:forcing}. Similarly, case 2 can only accept if $G$ is $k$-colorable. Note that in Case 2 we may accept if $G$ is $k$-colorable with a profile other than $p$, which again is permitted.
\paragraph{Runtime}
We spend $O^*(2^{H(1/2 - \eta)n})$ time for the preprocessing of \Cref{lem:forcing}.

For one call to the subroutine, consider the computation tree where internal nodes are main steps, and leaves are base cases. Each main step has branching factor at most 
\[B := 2^k \cdot \sum_{i=0}^c \binom{d}{i} \]
and this tree has depth at most $\lceil 2^k \lambda n/r \rceil$, so it has at most $B^{\lceil 2^k \lambda n/r \rceil}$ leaves. Over all repetitions, there are therefore at most \[B^{\lceil 2^k \lambda n/r \rceil} \cdot C_{rep} = (2B)^{\lceil 2^k \lambda n/r \rceil} \le O(2^{\beta 2^k \lambda n}) \]
where $\beta = \log(2B)/r$, where the final inequality holds because $B$ is a constant.

A leaf in base case 1 costs $2^{n-|W_R|} < 2^{n(1-\lambda)}$, so the total cost from all leaves of this type is bounded by
\begin{equation}\label{case1}
	O^*(2^{\beta  2^k \lambda n} \cdot 2^{n(1-\lambda)}) \le O^*(2^{(1-\lambda + \beta 2^k \lambda)n}).
\end{equation}
(Note that the cost of an internal node, i.e.~a main step, is $\poly(n)$.)

A leaf in case 2 costs 
\[k^w \cdot 2^{(1 - \mu)(n-w)} = 2^{(1 - \mu)n + (\log k - 1 + \mu)w} \le 2^{(1 - \mu)n + (\log k - 1 + \mu)2^k \lambda n} ,\]
so multiplying this by the leaf bound gives the cost
\begin{equation}\label{case2}
 O^*(2^{(1 - \mu + 2^k \lambda (\log k - 1 + \mu + \beta))n}).
\end{equation}

From \Cref{case1}, if $2^k \beta < 1/2$ the cost of leaves of type 1 is at most $2^{(1-\lambda / 2)n}$. So it suffices if

\[1/2 > 2^k(\log 2B)/r = 2^k(1 + k + \log \left (\sum_{i=0}^c \binom{d}{i}\right ))/(d-c).\]
This clearly holds for $d(k,c)$ large enough. Concretely, we may take $d = 2^{k+4}(c+1)(k+c+4)$: since $d > 2c$, using that $\sum_{i=0}^c \binom{d}{i} < (ed/c)^c$ and $d/c < 2^{k+5}(k+c+4)$, the above is at most
\[2^k \frac{k + c \log (ed/c) + 1}{2^{k+4}(c+1)(k+c+4) - c} \le \frac{k + c \log (4 \cdot 2^{k+5}(k+c+4)) + 1}{8 (c+1)(k+c+4)} \le \frac{k + c(k+7 + (k + c + 4) ) + 1}{8(c+1)(k+c+4)} < 1/2.\]

From \Cref{case2}, if
\[2^k \lambda (\log k - 1 + \mu + \beta) < \mu/2\]
which follows from our choice of
\[\lambda \le \frac{\mu}{2^{k+1} (\log(2^{k+1} \cdot \sum_{i=0}^c \binom{d}{i})/(d-c) + \log k + \mu - 1)}, \]
then this case has complexity at most $2^{(1-\mu/2)n}$.

So the final runtime is bounded by $O^*(2^{H(1/2-\eta)n} + 2^{(1-\mu/2)n} + 2^{(1-\lambda/2)n})$, with $\mu > 0, \lambda > 0$ being functions of $k$ and $c$.
\end{proof}

\begin{lemma}\label{lem:averaging}
	Let $\eta \le 1/(6(k-1))$. Let $p$ be a distribution on $[k]$, let $\emptyset \neq D \subset [k]$ with $0 < p(D) < 1/2 + \eta$, and let $u$ be a distribution on $D$. Then either there exists $i,j \in D$ such that $u(\{i, j\}) > 1/2 + 1/6k$, or there exists $i \in D$ such that $p(D \setminus \{i\}) < 1/2 - \eta$ and $u(D \setminus \{i\}) > 1/2 + 1/(3k)$. 
\end{lemma}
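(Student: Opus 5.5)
The plan is a direct case analysis. We call an element $i \in D$ \emph{removable} if $p(D \setminus \{i\}) < 1/2 - \eta$. The goal is to show that either some removable $i$ has $u(i) < 1/2 - 1/(3k)$, or some pair has $u$-mass exceeding $1/2 + 1/(6k)$. In the first case the second alternative of the lemma holds, since $u(D\setminus\{i\}) = 1 - u(i)$.

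\textbf{Small $|D|$.} If $|D| \le 2$, we take $\{i,j\} = D$ (with $i = j$ allowed when $|D| = 1$). Then $u(\{i,j\}) = 1 > 1/2 + 1/(6k)$, so assume $|D| \ge 3$ from now on.

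\textbf{Non-removable elements have large $p$-mass.} If $i$ is not removable, then $p(i) = p(D) - p(D\setminus\{i\}) < (1/2+\eta) - (1/2-\eta) = 2\eta$ fails, so $p(i) \ge 2\eta$. Let $L$ be the set of non-removable elements; each $i \in L$ has $p(i) \ge 2\eta$. Suppose, for contradiction with the second alternative, that every removable $i$ has $u(i) \ge 1/2 - 1/(3k)$. We split according to how many elements are removable.

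\textbf{Case (a): all elements are removable.} Then every $i \in D$ has $u(i) \ge 1/2 - 1/(3k) > 1/3$ for $k \ge 3$. This is impossible for $|D| \ge 3$, since the $u$-masses sum to $1$.

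\textbf{Case (b): at least two removable elements.} Any two removable $i,j$ satisfy $u(\{i,j\}) \ge 1 - 2/(3k) > 1/2 + 1/(6k)$, giving the pair alternative.

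\textbf{Case (c): exactly one removable element $i_0$.} If $u(i_0) \ge 1/2 + 1/(6k)$, pairing $i_0$ with anything works. Otherwise the rest of $D$ carries $u$-mass at least $1/2 - 1/(6k)$, spread over at most $k-1$ elements. Hence some $j$ has
\[u(j) \ge \frac{1/2 - 1/(6k)}{k-1} > \frac{1}{2k},\]
where the strict inequality is equivalent to $k - 1/3 > k - 1$. Then $u(\{i_0,j\}) > 1/2 - 1/(3k) + 1/(2k) = 1/2 + 1/(6k)$.

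\textbf{Case (d): no removable element.} Then every element of $D$ lies in $L$, so $p(D) \ge 2\eta|D|$. This is the step I expect to be the main obstacle. It is where the hypothesis $\eta \le 1/(6(k-1))$ together with $p(D) < 1/2+\eta$ must be used. It is also where one must track whether $D$ is allowed to equal $[k]$.

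The natural route is to bound how many elements of $D$ can lie in $L$, aiming to show $|L| \le |D|-1$, i.e. at least one element is removable. Since $p(D) < 1/2 + \eta$, we get $|L| \cdot 2\eta \le p(L) < 1/2 + \eta$. This alone bounds $|L|$ only when $\eta$ is not too small, so it does not finish the argument directly.

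I would therefore instead argue from the complementary side. Pick $i \in D$ with $p(i)$ maximal and check $p(D\setminus\{i\}) < 1/2 - \eta$ directly using $p(i) \ge p(D)/|D|$. If that fails, I would fall back on reorganizing the case split around the element of largest $p$-mass, rather than around the removable set.

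Once at least one removable element is guaranteed, cases (a)–(c) cover everything. Beyond that, only routine inequalities remain, all using $k \ge 3$.
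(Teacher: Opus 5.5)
Your cases (a)--(c) are fine, and together they amount to the paper's first case. Once you have some $i_0$ with $p(D\setminus\{i_0\})<1/2-\eta$, either $u(i_0)<1/2-1/(3k)$ and the second alternative holds, or $u(i_0)\ge 1/2-1/(3k)$ and pairing $i_0$ with the $u$-heaviest other element gives $u(\{i_0,j\})>1/2+1/(6k)$. But the whole argument rests on the existence of at least one removable element, and that is exactly the step you leave open in case (d): you propose a check and then hedge (``if that fails\dots''). This is not a side issue. It is the only place the hypothesis $\eta\le 1/(6(k-1))$ is needed, and your proposal never uses that hypothesis. Without it the lemma is false: take $k=5$, $p$ uniform, $|D|=4$, $u$ uniform on $D$, and $\eta=0.4$. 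As written, the proof is incomplete at its crux.

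Case (d) looked hard because of a sign error. If $i$ is not removable, then $p(D\setminus\{i\})\ge 1/2-\eta$, so $p(i)=p(D)-p(D\setminus\{i\})<(1/2+\eta)-(1/2-\eta)=2\eta$. Non-removable elements therefore have \emph{small} $p$-mass, not large. With the correct direction, case (d) is immediately impossible. Since $p(D)<1/2+\eta<1$, we have $D\neq[k]$ and $|D|\le k-1$. If no element were removable, you would get $p(D)<2\eta(k-1)\le 1/3<1/2-\eta$, contradicting $p(D\setminus\{i\})\ge 1/2-\eta$.

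Alternatively, your max-$p$ suggestion works and is exactly what the paper does. For $i$ maximizing $p(i)$ you have $p(i)\ge p(D)/k$, hence $p(D\setminus\{i\})\le(1-1/k)\,p(D)<(1/2+\eta)(1-1/k)\le 1/2-1/(3k)\le 1/2-\eta$. The middle inequality is equivalent to $\eta\le 1/(6(k-1))$, and the last to $\eta\le 1/(3k)$. Either fix completes your argument, which then coincides with the paper's up to reordering the case split.
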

\begin{proof}
First suppose that for some $i$, $u(i) \ge 1/2-c$ with $c = 1/(3k)$. If $|D| = 1$ then in fact $u(i) = 1$ and we are done (with $j=i$). Otherwise, by averaging, there is a $j \neq i \in D$ with $u(j) \ge (1-u(i))/k$. Together, \[u(\{i,j\}) = u(i) + u(j) > 1/2 - c + (1/2 + c)/k = 1/2 + 1/2k + c/k -c\]
which is greater than $1/2 + 1/6k$. 
	
Otherwise suppose that $u(i) < 1/2 - c$ for all $i \in D$. Then  $u(D \setminus \{i\}) > 1/2 + c$ for all $i$. Furthermore, for some $i \in D$,  $p(i) > p(D)/k$. Then 
\[p(D \setminus \{i\}) < p(D) - p(D)/k = p(D)(1-1/k) < (1/2 + \eta)(1-1/k)\]
so if $\eta \le (1/2 - 1/(3k))/(1-1/k) - 1/2 = 1/(6(k-1))$ this is at most $1/2 - 1/(3k) \le 1/2 - \eta$.
\end{proof}

\begin{theorem}\label{thm:main}
For all $k \ge 3$, there exists $\varepsilon_k > 0$ such that, there is a randomized algorithm running in time $(2-\varepsilon_k)^n$ which, given as input an $n$-vertex graph $G$ and profile $p$,
\begin{itemize}
	\item Returns ``yes" with constant probability if $G$ admits a $k$-coloring with profile $p$,
	\item Returns no if $G$ is not $k$-colorable.
\end{itemize}
\end{theorem}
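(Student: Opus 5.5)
We will prove \Cref{thm:main} by running three algorithms in sequence and accepting if any of them accepts; each individually never accepts a non-$k$-colorable graph. First fix $\eta = 1/(6(k-1))$, so that \Cref{lem:averaging} applies. Fix a large threshold $T = T(k)$, chosen at the end, and set $c = Tk$ and $d = \max(d(k,c), T(k-1)+1)$ with $d(k,c)$ from \Cref{lem:degen}. Also let $\sigma = \ln(4k)/T$.

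\textbf{Algorithm 0.} If at least $n/100$ vertices have degree at most $d$, run \Cref{thm:zamir} with all lists equal to $[k]$. This takes time $(2-\mu_{1/100,d,k})^n$.

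\textbf{Otherwise,} fix a witness coloring $C$ with profile $p$, and define $\Sigma_T(v)$ and $U_R$ as in the overview. Since $d \ge T(k-1)+1$, every vertex of degree at least $d$ lies in some $U_R$ with $R \neq \emptyset$. There are two cases.

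\textbf{Algorithm 1.} Suppose at least $15/16$ of the vertices lie in $\bigcup_{R\ \eta\text{-good}} U_R$. A vertex in such a $U_R$ has fewer than $T$ neighbors of each color outside $R$, hence at most $Tk = c$ neighbors outside $R$. So the promise of \Cref{lem:degen} holds, and that lemma runs in time $(2-\varepsilon)^n$.

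\textbf{Algorithm 2.} Otherwise, more than $1/32$ of the vertices lie in $U_R$'s with $p(R) > 1/2-\eta$. By pigeonhole there is some such $R_0$ with $|U_{R_0}| \ge 2^{-k-5}n$.

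The algorithm samples a uniform random $S$ of size $\sigma n$ and enumerates all $k^{\sigma n}$ colorings of $S$, discarding improper ones. For the true coloring of $S$, each $v \in U_{R_0}$ sees every color of $R_0$ in $S$ except with probability at most $k(1-\sigma)^T \le k e^{-\ln(4k)} = 1/4$ (approximately, since sampling is without replacement). Hence in expectation $3/4$ of $U_{R_0}$ lands in sets $V_R(S)$ with $R \supseteq R_0$. By Markov's inequality and pigeonhole over at most $2^k$ palettes, with constant probability some $R \supseteq R_0$ has $|V_R| \ge 2^{-2k-6}n$.

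Fix such an $R$. The algorithm does not know it, so it guesses it among the $2^k$ options. Vertices of $V_R$ adjacent to a vertex of color $i \in R$ cannot have color $i$, so the witness colors on $V_R$ lie in $D = R^c$. We have $p(D) = 1 - p(R) < 1/2 + \eta$. If $D = \emptyset$, this coloring of $S$ is inconsistent and we discard it.

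Let $u$ be the empirical distribution of witness colors on $V_R$. \Cref{lem:averaging} gives $R' \subseteq D$ with $u(R') > 1/2 + 1/(6k)$, where either $|R'| \le 2$ or $p(R') < 1/2 - \eta$. We guess $R'$ among the at most $k^2$ candidates.

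Then we sample $A \subseteq V_R$ of size $a = |V_R|/(6k)$ uniformly, and call \Cref{lem:forcing} with the query $(A, R')$, repeating $N$ times. The probability that all of $A$ lies in the witness colors of $R'$ is
\[
\binom{u(R')|V_R|}{a} \Big/ \binom{|V_R|}{a} \ge \left(\frac{u(R') - 1/(6k)}{1}\right)^{a} \ge 2^{-a},
\]
and we want more: $2^{-\zeta a}$ with $\zeta < 1$. Writing $m = |V_R|$ and $q = u(R') > 1/2 + 1/(6k)$, the ratio $\binom{qm}{a}/\binom{m}{a}$ is at least $\bigl((qm-a)/(m-a)\bigr)^a$. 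With $a = m/(6k)$ this base is $(q - 1/(6k))/(1 - 1/(6k)) > (1/2)/(1-1/(6k)) > 1/2$. So $\zeta = -\log\bigl(1/(2(1-1/(6k)))\bigr) < 1$, and setting $N = 2^{\zeta a}\cdot O(1)$ repetitions gives constant success probability.

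The total cost of Algorithm 2, beyond preprocessing, is
\[
\binom{n}{\sigma n}\, k^{\sigma n} \cdot 2^k k^2 \cdot 2^{\zeta a} \cdot 2^{n-a} = 2^{n - (1-\zeta)a + O(\sigma \log(k/\sigma)) n}.
\]
Here $a \ge 2^{-2k-6}n/(6k)$. Only one random $S$ is drawn; we repeat the whole procedure $O(1)$ times to boost success.

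\textbf{Choice of $T$ and the main obstacle.} We need $\sigma\log(ek/\sigma) = \ln(4k)\log(ekT/\ln 4k)/T$ to be less than half of $(1-\zeta)2^{-2k-6}/(6k)$. This holds for $T$ large as a function of $k$, since the left side tends to $0$ as $T$ grows. Note that $T$ feeds into $c$ and then $d$, but Algorithm 1 tolerates any fixed $c$, so there is no circularity. Preprocessing for \Cref{lem:forcing} costs $2^{H(1/2-\eta)n}$ and is shared.

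The delicate step is the probabilistic claim in Algorithm 2 for a fixed, unknown $R$. I would handle it by fixing the witness, computing $\mathbb{E}\,|U_{R_0} \setminus \bigcup_{R\supseteq R_0} V_R| \le |U_{R_0}|/4$, and applying Markov's inequality so that with probability at least $1/3$ at least half of $U_{R_0}$ is covered. Pigeonhole then yields the large $V_R$.

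Correctness on no-instances follows because every branch only accepts through \Cref{lem:forcing} or \Cref{thm:zamir}, which certify genuine colorings. The final $\varepsilon_k$ is the minimum of the savings from Algorithms 0, 1 and 2 and from the preprocessing.
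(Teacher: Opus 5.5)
Your proposal is correct and follows essentially the same route as the paper's proof: \Cref{thm:zamir} when at least $n/100$ vertices have degree at most $d$, then \Cref{lem:degen} with $c=kT$ when most vertices lie in $U_R$'s with $R$ $\eta$-good, and otherwise a random $S$ of size $\sigma n$, a large $V_R$ with $R\supseteq R_0$, \Cref{lem:averaging} on $D=R^c$, and $2^{\zeta a}$ random forcing sets with the same $\zeta=\log(2-1/(3k))$ (the extra $\binom{n}{\sigma n}$ factor in your runtime is unnecessary since only one $S$ is drawn, but it is harmless). Two details should be made explicit. First, the Case 1 hypothesis should require that $15/16$ of the vertices have degree at least $d$ \emph{and} lie in an $\eta$-good $U_R$, because $U_\emptyset$ is $\eta$-good and consists only of low-degree vertices; this is also what gives your $99/100-15/16>1/32$ count in Case 2. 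Second, the algorithm should branch only on those $R$ with $|V_R|\ge 2^{-2k-6}n$, since your bound $a\ge 2^{-2k-6}n/(6k)$, and hence the runtime, depends on this restriction.
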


\begin{proof}
Throughout, we assume that $n\geq n_0(k)$ for a sufficiently large constant
$n_0(k)$.
\paragraph{Parameters} We fix parameters
\begin{align*}
\varepsilon &= 1/100,\\
\eta &= \frac{1}{6(k-1)},\\
\zeta & = -\log 1/(2-1/(3k)),\\
T &= \bigg \lceil \frac{2 \ln 4k  \cdot \log k \cdot 2^{2k + 7} \cdot 6k}{1-\zeta} \bigg \rceil,\\
c &= kT.
\end{align*}
Finally, let $d(k,c)$ denote the degree threshold supplied
 by \Cref{lem:degen}, and set
 \[
 d=\max\{T(k-1)+1,d(k,c)\}.
 \]
Note that these are ultimately all functions of $k$.



\begingroup
\RestyleAlgo{boxed}
\begin{algorithm}[t]
	\DontPrintSemicolon
	\SetAlgoVlined
	\SetInd{0.5em}{1.25em}
	
	\TitleOfAlgo{A fixed-profile $k$-coloring algorithm.}
	\label{alg:coloring}
	
	\If{at least $\varepsilon n$ vertices have degree at most $d$}{
		Run \Cref{thm:zamir} and return its result.\;
	}

	\BlankLine
	
	Run \Cref{lem:degen} with $c=kT$.\;
	\If{it accepts}{
		Accept.\;
	}
	
	\BlankLine
	Perform the preprocessing step of \Cref{lem:forcing}.\;
	
	Sample a uniformly random $S \subset V(G)$ of size $\lceil \sigma n \rceil$, where
	$\sigma=\ln(4k)/T$.\;
	\BlankLine
	\ForEach{valid $k$-coloring of $S$}{
		For $R \subseteq [k]$, let $V_R$ be the set of vertices in $G$ incident to a vertex in $S$ of each color
		in $R$, and to no vertex in $S$ of color not in $R$.\;
		
		\BlankLine
		
		\ForEach{$R \subset [k]$ such that
			$|V_R|>2^{-2k-6} \cdot n$}{
			
			Let $a_R \gets \lfloor \frac{|V_R|}{6k} \rfloor$.\;
			
			\ForEach{$i,j\in[k]$}{
				\For{$t \gets 1$ \KwTo $\lfloor 2^{\zeta \cdot a_R} \rfloor$}{
					Sample a uniformly random $A\subseteq V_R$ of size
					$a_R$.\;
					
					Assign $A$ to palette $\{i,j\}$ and run
					\Cref{lem:forcing}.\;
					
					\If{\Cref{lem:forcing} accepts}{
						Accept.\;
					}
				}
			}
			
			\BlankLine
			
			\ForEach{$i \in [k]$ such that $R^c\setminus\{i\}$ is nonempty and $\eta$-good}{
				\For{$t \gets 1$ \KwTo $\lfloor 2^{\zeta\cdot a_R} \rfloor$}{
					Sample a uniformly random $A\subseteq V_R$ of size
					$a_R$.\;
					
					Assign $A$ to palette $R^c\setminus\{i\}$ and run
					\Cref{lem:forcing}.\;
					
					\If{\Cref{lem:forcing} accepts}{
						Accept.\;
					}
				}
			}
		}
	}
	
	\BlankLine
	
	Reject.\;
\end{algorithm}
\endgroup

\paragraph{Correctness}
Suppose that $G$ is not $k$-colorable. Because our algorithm only accepts after either calling \Cref{thm:zamir}, \Cref{lem:degen}, or \Cref{lem:forcing}, it follows from the one-sidedness of those algorithms that we will reject.

Now suppose $G$ is $k$-colorable with profile $p$. For the rest of the proof, we fix such a witness coloring $C = (C_1, \ldots, C_k)$.

First, if there are at least $\varepsilon n$ vertices of degree at most $d$, the algorithm just runs \Cref{thm:zamir}, so it is correct. Note that it may accept even if $G$ is $k$-colorable but not with profile $p$. From now on suppose that fewer than $\varepsilon n$ vertices have degree at most $d$.

For a vertex $v$, let 
\[\Sigma_T(v) = \{i \in [k]: |N(v) \cap C_i| \ge T\}\]
and let
\[U_{R} = \{v \in V(G) : \Sigma_T(v) = R\}.\]
Since $d \ge T(k-1) + 1$, every vertex of degree at least $d$ is contained in some $U_{R}$, so only the at most $\varepsilon n$ low-degree vertices are absent from any $U_{R}$ with $R \neq \emptyset$. Note that $U_{[k]} = \emptyset$, as no vertex can see its own color in a valid coloring.

Suppose the witness coloring satisfies the condition that at least a $15/16$ fraction of vertices have degree at least $d$ and belong to some $U_{R}$ with $R$ being $\eta$-good. Then \Cref{lem:degen}, ran with $c = Tk$, will detect this coloring with constant probability. Here we're using the fact that if $v$ is in $U_{R}$, it has fewer than $kT$ ``exceptional" neighbors with color not in $R$. Also, $d$ was chosen to satisfy the condition of \Cref{lem:degen}.

Suppose the witness coloring does not satisfy this condition. Recall that at least a $99/100$ fraction of vertices have degree at least $d$ (and hence also belong to some $U_{R}$). So if less than a $15/16$ fraction of vertices have degree at least $d$ and belong to a $U_{R}$ with $R$ good, then at least a $99/100 - 15/16 > 2^{-5}$ fraction of vertices must belong to
\[\bigcup_{p(R) > 1/2 - \eta} U_{R}.\]
By averaging, there exists an $R_0$ such that $p(R_0) > 1/2 - \eta$ and $|U_{R_0}|/n > 2^{-k-5} =: \gamma$. We now argue that some $V_{R}$ with $R_0 \subset R$ is also large with constant probability.

Let $S \subset V(G)$ with $|S| = \lceil \sigma n \rceil = \lceil \ln(4k) n/T \rceil$ be the randomly chosen set. For $v \in U_{R_0}$, the probability that $v$ does not see one of its neighbors of color $i \in R_0$, of which there are at least $T$, is at most $(1-\ln(4k)/T)^T < e^{-\ln(4k)} = 1/(4k)$. Hence the probability that $v$ sees all colors in $R_0$ is at least
\[1-|R_0| \cdot 1/(4k) > 3/4\]
Thus the expected number of $v \in U_{R_0}$ seeing one of each color in $R_0$ is at least
\[3|U_{R_0}|/4 \]
and since this random variable is at most $|U_{R_0}|$, it is at least $|U_{R_0}|/2$ with probability at least $1/2$.
If this happens, there then exists some $V_{R}$ with $R_0 \subseteq R$ and with size at least
\begin{equation}\label{eqn:bigVR}
|U_{R_0}|/2
\cdot 2^{-k} \ge \gamma\cdot 2^{-k-1} \cdot n = 2^{-2k-6} \cdot n .
\end{equation}
The above algorithm identifies this set with constant probability.

From now on, consider the branch where there exists such a $V_R$, we identified $R$, and we correctly guessed the witness coloring restricted to $S$.
 
Let $u_R$ be the distribution on $[k]$ where $u_R(i)$ is proportional to the number of vertices in $V_R$ whose witness color is $i$. Note $u_R$ is supported on $R^c$, and also note that $1/2 - \eta < p(R_0) \le p(R)$, so $p(R^c) < 1/2 + \eta$. Applying \Cref{lem:averaging} with $D = R^c$ and $u=u_R$, at least one of the following holds:
\begin{itemize}
	\item There exist $i,j \in R^c$ with $u_{R}(\{i,j\}) > 1/2 + 1/(6k)$. In this case, consider the branch where we correctly guess $i,j$. We now analyze the probability of success after choosing $A$ randomly and invoking \Cref{lem:forcing}. Recall $|A| = \lfloor |V_R|/(6k) \rfloor$.

	The probability that $A$ is contained in $V_R \cap (C_i \cup C_j) := Q$ is at least
	\begin{align*}
	\frac{\binom{|Q|}{|A|}}{\binom{|V_R|}{|A|}} & = \prod_{\ell=0}^{|A|-1} \frac{|Q|-\ell}{|V_R|-\ell} \ge \left ( \frac{|Q|-|A|}{|V_R| - |A|} \right )^{|A|} \ge   \left ( \frac{(1/2 + 1/(6k))|V_R|- |A|}{|V_R| - |A|} \right )^{|A|}\\
	&=    \left ( \frac{(1/2 + 1/(6k))- |A/|V_R|}{1 - |A|/|V_R|} \right )^{|A|} \ge \left ( \frac{(1/2 + 1/(6k)) - 1/(6k)}{1-1/(6k)} \right )^{|A|} = 2^{-\zeta |A|}
	\end{align*}

	where the last inequality follows since $(c-x)/(1-x)$ is decreasing for $x \in (0,1)$ when $c \in (0,1)$. Recall $\zeta = -\log 1/(2-1/(3k))$. Repeating $2^{\zeta|A|}$ many times, with constant probability it holds that $A \subset Q$ for some trial. Once this happens, \Cref{lem:forcing} will detect the hidden coloring.
	\item Suppose there exists $i \in R^c$ with $p(R^c \setminus \{i\}) < 1/2 - \eta$ and $u_{R}(R^c \setminus \{i\}) > 1/2 + 1/(3k) > 1/2 + 1/(6k)$. 
	Assign a random subset $A \subset V_{R}$ of size $|V_R|/(6k)$ to the palette $R^c \setminus \{i\}$, and run \Cref{lem:forcing}. From the analysis in the previous case, by repeating $2^{\zeta|A|}$ times, with constant probability $A \subseteq \cup_{j \in R^c \setminus \{i\}} C_j$ for some trial. Once that happens, \Cref{lem:forcing} detects the hidden coloring.
\end{itemize}

\paragraph{Runtime}
The parameters $\eta, d, c$ are all functions of $k$, and $\varepsilon$ is an absolute constant. It follows that the call to \Cref{thm:zamir}, the preprocessing step of \Cref{lem:forcing}, and the call to \Cref{lem:degen} all run in time $O((2-f(k))^n)$ for some $f(k)>0$.

We then enumerate all $k^{\sigma n}$ colorings of $S$. Note that the number of $R$ with $|V_R|$ large, the number of pairs $i,j \in [k]$, and the number of $i$ with $R^c \setminus \{i\}$ being $\eta$-good, are all functions of $k$, and hence constant. We then sample $\lfloor 2^{\zeta|A|} \rfloor$ random $A$'s and run \Cref{lem:forcing}, which takes time $O^*(2^{n - |A|})$. The runtime of this stage (lines 9--23) is therefore bounded by
\[O^*(k^{\sigma n} \cdot 2^{\zeta|A|} \cdot 2^{n-|A|}),\]
which has exponential rate
\[1+\sigma \log k + |A|/n(\zeta - 1) = 1+\ln(4k)/T \log k + |A|/n(\zeta - 1).\]
Since 
\[|A| = \lfloor |V_R|/(6k) \rfloor > |V_R|/(12k) \ge n/(2^{2k + 7} \cdot 6k)\]
by \Cref{eqn:bigVR}, and since $\zeta - 1 < 0$,
this rate is at most
\[1+ (\ln 4k  \cdot \log k)/T  + (\zeta - 1)/(2^{2k + 7} \cdot 6k).  \]
For sufficiently large $T$ depending on $k$, this is less than one. For instance, taking
\[T \ge \frac{2  (\ln 4k  \cdot \log k \cdot 2^{2k + 7} \cdot 6k)}{1-\zeta}\]
this stage takes time
\[O^*(2^{n(1+ (\zeta-1)/(2^{2k + 8} \cdot 6k))}). \qedhere \]
\end{proof}

By enumerating all profiles and running \Cref{thm:main} for each, and then repeating this constantly many times to amplify the success probability to $2/3$, we conclude \Cref{thm:fin}.

\section{AI Disclosure}
The author used OpenAI’s ChatGPT (GPT-5.5 and 5.6 Pro) for technical drafting, runtime and parameter analysis, and exploring proof strategies. Most significantly, it contributed to the proof of \Cref{lem:degen} by repairing an initial algorithm of the author's by suggesting the second base case. The author independently verified all claims appearing in the manuscript, and takes sole responsibility for the work.
\bibliographystyle{amsalpha}
\bibliography{refs}
\end{document}